\newcommand{\remove}[1]{}
\newcommand{\gracsimdrawlong}{\textsc{Geometric Rac Simultaneous drawing}}
\newcommand{\gracsim}{\textsc{GRacSim}}
\newcommand{\gracsimdrawing}{\textsc{GRacSim drawing}}
\newcommand{\gracsimdraw}{\textsc{GRacSim} drawing}
\newcommand{\gracsimdraws}{\textsc{GRacSim} drawings}
\newcommand{\sefelong}{\textsc{Simultaneous Embedding with Fixed Edges}}
\newcommand{\sefe}{\textsc{SEFE}}
\newcommand{\onesefe}{\textsc{1-SEFE}}
\newcommand{\ksefe}{\textsc{k-SEFE}}
\newcommand{\kpairsefe}{\mbox{\textsc{k-pair-SEFE}}}
\newcommand{\sgelong}{\textsc{Simultaneous Geometric Embedding}}
\newcommand{\sge}{\textsc{SGE}}
\newcommand{\threeplong}{\mbox{\textsc{3-Partition}}}
\newcommand{\threep}{\textsc{3P}}
\newcommand{\np}{$\mathcal{NP}$}
\newcommand{\npc}{\mbox{$\mathcal{NP}$-complete}}
\newcommand{\npcness}{\mbox{$\mathcal{NP}$-completeness}}
\newcommand{\nph}{\mbox{$\mathcal{NP}$-hard}}
\newcommand{\nphness}{\mbox{$\mathcal{NP}$-hardness}}
\spnewtheorem*{sketch}{Sketch of Proof}{\itshape}{\rmfamily}
\title{On the \nphness{} of \gracsimdrawing{} and\\ \ksefe{} Problems}
\author{Luca Grilli}
\institute{
Department of Engineering, University of Perugia, Italy\\
\mailsa}
\begin{document}

\maketitle
\thispagestyle{plain} 

\begin{abstract}
We study the complexity of two problems in simultaneous graph drawing.
The first problem, \gracsimdrawing{}, asks for finding a simultaneous geometric embedding of two graphs such that only crossings at right angles are allowed.
The second problem, \ksefe{}, is a restricted version of the topological simultaneous embedding with fixed edges (\sefe{}) problem, for two planar graphs, in which every private edge may receive at most $k$ crossings, where $k$ is a prescribed positive integer.
We show that \gracsimdrawing{} is \nph{} and that \ksefe{} is \npc{}.
The \nphness{} of both problems is proved using two similar reductions from \threeplong{}.
\end{abstract}

\section{Introduction}
\label{sec:intro}
The problem of computing a simultaneous embedding of two or more graphs has been extensively explored by the graph drawing community.
Indeed, besides its inherent theoretical interest~\cite{ad-scp-14,adf-seepg-13,adfpr-tsetgibcg-12,aefg-smliogpg-16,agkn-tpgse-12,adn-asefepbep-15,bdkw-sdpgrac-16,bkr-seeorpc-13,bkr-hgdv-13,br-spqoacep-13,cfglms-dpespg-15,ek-sepgfb-05,ekln-sgdlavs-05,egjpss-sgge-07,elm-svrpstguls-16,f-egsfe-06,fhk-sefbc-15,gjpss-sgefe-06,ghkr-dsegfb-14,hjl-tspcg2c-13,js-igsefe-09,s-ttphtpv-13}, it has several applications in dynamic network visualization, especially when a visual analysis of an evolving network is needed.
Although many variants of this problem have been investigated so far, a general formulation for two graphs can be stated as follows: Let $G_1 = (V_1, E_1)$ and $G_2 = (V_2, E_2)$ be two planar graphs sharing a \emph{common} (or \emph{shared}) subgraph $G = (V, E)$, where $V = V_1 \cap V_2$ and $E = E_1 \cap E_2$. 
\emph{Compute a planar drawing $\Gamma_1$ of $G_1$ and a planar drawing $\Gamma_2$ of $G_2$ such that the restrictions to $G$ of these drawings are identical.}
By overlapping $\Gamma_1$ and $\Gamma_2$ in such a way that they perfectly coincide on $G$, it follows that edge crossings may only occur between a \emph{private} edge of $G_1$ and a \emph{private} edge of $G_2$, where a \emph{private} (or \emph{exclusive}) edge of $G_i$ is an edge of $E_i \setminus E$ ($i=1,2$).

Depending on the drawing model adopted for the edges, two main variants of the simultaneous embedding problem have been proposed: \emph{topological} and \emph{geometric}.
The topological variant, known as \sefelong{} (or \sefe{} for short), allows to draw the edges of $\Gamma_1$ and $\Gamma_2$ as arbitrary open Jordan curves, provided that every edge of $G$ is represented by the same curve in $\Gamma_1$ and $\Gamma_2$.
Instead, the geometric variant, known as \sgelong{} (or \sge{} for short), imposes that $\Gamma_1$ and $\Gamma_2$ are two straight-line drawings.
The \sge{} problem is therefore a restricted version of \sefe{}, and it turned out to be ``too much restrictive'', i.e. there are examples of pairs of structurally simple graphs, such as a path and a tree~\cite{agkn-tpgse-12}, that do not admit an \sge{}.
Also, testing whether two planar graphs admit a simultaneous geometric embedding is \nph{}~\cite{egjpss-sgge-07}.
Compared with \sge{}, pairs of graphs of much broader families always admit a \sefe{}, in particular there always exists a \sefe{} when the input graphs are a planar graph and a tree~\cite{f-egsfe-06}.
In contrast, it is a long-standing open problem to determine whether the existence of a \sefe{} can be tested in polynomial time or not, for two planar graphs; though, the testing problem is \npc{} when generalizing \sefe{} to three or more graphs~\cite{gjpss-sgefe-06}.
However, several polynomial time testing algorithms have been provided under different assumptions~\cite{adfjkpr-tppeg-10,adfpr-tsetgibcg-12,bkr-hgdv-13,br-spqoacep-13,hjl-tspcg2c-13,s-ttphtpv-13}, most of them involve the connectivity or the maximum degree of the input graphs or of their common subgraph.

In this paper we study the complexity of the \gracsimdrawlong{} problem~\cite{abks-gracsdg-13} (\gracsimdrawing{} for short): a restricted version of \sge{}, which asks for finding a simultaneous geometric embedding of two graphs, such that all edge crossings must occur at right angles.
We show that \gracsimdrawing{} is \nph{} by a reduction from \threeplong{}; see Section~\ref{sec:gracsimnph}.
Moreover, we introduce a new restricted version of the \sefe{} problem, called \ksefe{}, in which every private edge may receive at most $k$ crossings, where $k$ is a prescribed positive integer.
We then show that \ksefe{} is \npc{} for any fixed positive $k$, to prove the \nphness{} we use a similar reduction technique as that for \gracsimdrawing{}; see Section~\ref{sec:ksefenpc}.


\section{Preliminaries}
\label{sec:prel}
Let $G = (V, E)$ be a simple graph.
A \emph{drawing} $\Gamma$ of $G$ maps each vertex of $V$ to a distinct point in the plane and each edge of $E$ to a simple Jordan curve connecting its end-vertices.
Drawing $\Gamma$ is \emph{planar} if no two distinct edges intersect, except at common end-vertices.
$\Gamma$ is a \emph{straight-line planar drawing} if it is planar and all its edges are represented by straight-line segments.
$G$ is \emph{planar} if it admits a planar drawing.
A planar drawing $\Gamma$ of $G$ partitions the plane into topologically connected regions called \emph{faces}.
The unbounded face is called the \emph{external} (or \emph{outer}) face; the other faces are the \emph{internal} (or \emph{inner}) faces.
A face $f$ is described by the circular ordering of vertices and edges that are encountered when walking along its boundary in clockwise direction if $f$ is internal, and in counterclockwise direction if $f$ is external.
A \emph{planar embedding} of a planar graph $G$ is an equivalence class of planar drawings that define the same set of faces for $G$.
A \emph{plane graph} is a planar graph with an associated planar embedding and a prescribed outer face.
Let $H$ be a plane graph.
The \emph{weak dual} $H^*$ of $H$ is a graph whose vertices correspond to the internal faces of $H$, and there is an edge between two vertices if the corresponding internal faces in $H$ share one or more edges.
A \emph{fan} is a graph formed by a path $\pi$ plus a vertex $v$ and a set of edges connecting $v$ to every vertex of $\pi$; vertex $v$ is called the \emph{apex} of the fan.
A \emph{wheel} is a graph consisting of a cycle $C$ plus a vertex $c$ and a set of edges connecting $c$ to every vertex of $C$; vertex $c$ is the \emph{center} of the wheel.

\newpage
\section{NP-hardness of \gracsimdrawing{}}\label{sec:gracsimnph}
In this section, we study the complexity of the following problem.
\\

\noindent
\begin{minipage}[t]{0.15\textwidth}
\raggedleft
\textsf{Problem}:
\end{minipage}
\begin{minipage}[t]{0.5\textwidth}
\end{minipage}
\begin{minipage}[t]{0.80\textwidth}
\raggedright
\gracsimdrawlong{} (\gracsimdrawing{})
\end{minipage}

\noindent
\begin{minipage}[t]{0.15\textwidth}
\raggedleft
\textsf{Instance}:
\end{minipage}
\begin{minipage}[t]{0.5\textwidth}
\end{minipage}
\begin{minipage}[t]{0.80\textwidth}
\raggedright
Two planar graphs $G_1 = (V, E_1)$ and $G_2 = (V, E_2)$ sharing a common subgraph $G = (V, E) = (V, E_1 \cap E_2)$.
\end{minipage}

\noindent
\begin{minipage}[t]{0.15\textwidth}
\raggedleft
\textsf{Question}:
\end{minipage}
\begin{minipage}[t]{0.5\textwidth}
\end{minipage}
\begin{minipage}[t]{0.80\textwidth}
\raggedright
Are there two straight-line planar drawings $\Gamma_1$ and $\Gamma_2$, of $G_1$ and $G_2$, respectively, such that (\emph{i}) every vertex is mapped to the same point in both drawings, and (\emph{ii}) any two crossing edges $e_1$ and $e_2$, with $e_1 \in E_1 \setminus E$ and $e_2 \in E_2 \setminus E$, cross only at right angle?
\end{minipage}

\begin{theorem}\label{th:gracsimnph}
Deciding whether two graphs have a \gracsimdrawing{} is~\nph{}.
\end{theorem}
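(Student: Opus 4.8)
The plan is to reduce from \threeplong{}, which is \emph{strongly} \nph{}; this is what lets me encode each input integer by a gadget of size linear in its value while keeping the whole instance of polynomial size. Given integers $a_1,\dots,a_{3m}$ with $\sum_{i} a_i = mB$ and $B/4 < a_i < B/2$, I would construct in polynomial time two planar graphs $G_1 = (V,E_1)$ and $G_2 = (V,E_2)$ on a common vertex set $V$, sharing the subgraph $G = (V, E_1\cap E_2)$, so that a \gracsimdraw{} exists if and only if the $a_i$ can be split into $m$ triples of sum $B$. Since $G_1$ and $G_2$ share \emph{all} of $V$, a candidate solution is nothing but a single placement of the points of $V$ for which both graphs are straight-line planar and every crossing between a private edge of $G_1$ and a private edge of $G_2$ is a right angle. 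The shared subgraph $G$ will act as a rigid scaffold carrying $m$ \emph{bin} gadgets of capacity $B$; the private edges will carry the $3m$ \emph{element} gadgets; and the right-angle requirement will be the sole device that turns the continuous point placement into a discrete packing problem.

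The technical heart is precisely to suppress the enormous freedom of straight-line placements and leave only the intended discrete choices. The lever is that the right-angle condition couples the directions of crossing private edges: if planarity of $\Gamma_1$ forces a private edge $e_1 \in E_1\setminus E$ to cross a whole bundle of private edges $e_2^{(1)},\dots,e_2^{(t)} \in E_2\setminus E$, then, being each perpendicular to the common segment $e_1$, the $e_2^{(j)}$ must all be mutually parallel; dually, $e_1$ is pinned to the direction orthogonal to the bundle. I would realise the scaffold from near-triangulated pieces such as fans and wheels, whose rigidity in a planar straight-line drawing is well understood, so that the bins appear as horizontal \emph{baseline} edges of $G_1$ and every element edge of $G_2$ that must cross a baseline is thereby forced vertical; a shared vertical reference then propagates the same horizontal direction to all baselines. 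Proving that the scaffold, together with planarity of both drawings, actually pins the relevant points up to the routing freedom I wish to retain is the delicate part.

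With such a frame available, I would encode an integer $a_i$ by a gadget that, when routed into a bin, consumes exactly $a_i$ of that bin's capacity: concretely, it contributes $a_i$ parallel vertical private edges of $G_2$, each forced to cross the horizontal baseline of the chosen bin inside one of the $B$ admissible crossing cells that the scaffold delimits along that baseline, with the gadget and cell structure arranged so that planarity of $\Gamma_2$ admits at most one element edge per cell. Thus each bin has capacity exactly $B$, routing an element into a bin amounts to deciding which baseline its edges cross, and the total capacity $mB$ equals the total demand $\sum_i a_i$. Hence every cell of every bin must be occupied, and the bound $B/4 < a_i < B/2$ makes a capacity-$B$ bin hold neither two nor four but exactly three elements; a feasible drawing therefore exhibits a partition into triples of sum $B$, and conversely any such partition is realised by placing, bin by bin, the three gadgets of each triple.

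It then remains to verify the two directions of the equivalence and to check the polynomial bound on $|V|$, $|E_1|$ and $|E_2|$. The forward direction is a constructive layout: fix a valid partition, lay the $m$ bins side by side along the scaffold, and drop the three element gadgets of each triple into its $B$ cells, after which one checks that both drawings are planar and that every private-private crossing is orthogonal by construction. The backward direction reads the partition off the cell assignment of a given \gracsimdraw{}, using the established rigidity to exclude degenerate placements that might let a bin overflow or an element edge escape its forced direction. I expect this rigidity argument --- showing that fixed shared vertices, planarity of the two drawings, and perpendicular crossings together leave no continuous slack capable of violating the capacity bound --- to be the main obstacle, since straight-line planar drawings are a priori highly non-rigid and all the discreteness of the packing has to be manufactured by the right-angle condition alone.
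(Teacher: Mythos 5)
Your high-level plan coincides with the paper's: a reduction from \threeplong{} (exploiting its strong \nphness{} to afford unary-size gadgets), a rigid shared scaffold delimiting $m$ bins of capacity $B$, element gadgets of size $a_i$, and the bound $B/4 < a_i < B/2$ forcing exactly three elements per bin. But the proposal stops exactly where the proof has to begin: the device that converts the right-angle condition into a hard capacity bound is never constructed, and the one mechanism you do sketch is doubtful. You propose that each bin be a horizontal \emph{baseline} edge of $G_1$ crossed by vertical element edges of $G_2$, with ``$B$ admissible crossing cells'' along the baseline and ``planarity of $\Gamma_2$'' admitting at most one element edge per cell. It is unclear what could delimit those cells: shared edges cannot cross the baseline (they belong to $E_1$, and $\Gamma_1$ is planar), so the baseline lies in a single face of the shared graph, and nothing in planarity of $\Gamma_2$ alone caps the number of perpendicular $G_2$-edges crossing one straight segment --- a single segment can absorb arbitrarily many parallel orthogonal crossings. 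The capacity bound is the whole reduction; without it the argument that a bin cannot overflow, and hence the backward direction, collapses.

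The paper resolves this with an inverted and essentially counting-based mechanism that you may want to compare against. The bin's capacity is not a number of cells along an edge but the number of inner vertices of a \emph{transversal path}: an alternating path of $2B+1$ private edges (alternately in $E_1\setminus E$ and $E_2\setminus E$) joining two scaffold vertices of a wedge, so that it must separate the two poles and hence must be traversed by every element placed in that wedge. Each element gadget (\emph{slice}) carries a \emph{tunnel}: a biconnected internally-triangulated outerplane graph with $2a_i$ triangles whose boundary toward the poles consists of shared fans. The right-angle condition enters only locally: a straight path edge cannot enter and leave a triangle of a tunnel through the same private edge while crossing it at right angles, so every triangle consumes at least one inner vertex of the path. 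Since the path has exactly $2B$ inner vertices and the tunnels in a wedge are pairwise non-nested and non-overlapping (which is itself argued via the shared $2$-edge paths parallel to each private tunnel edge), the wedge capacity is $B$ and the packing argument goes through. To turn your proposal into a proof you would need either to adopt such a vertex-counting device or to supply a genuinely different gadget in which the number of legal orthogonal crossings of a bin is provably bounded; as written, that step is missing.
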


\begin{proof}
We prove the \nphness{} by a reduction from \threeplong{} (\threep{}).\\

\noindent
\begin{minipage}[t]{0.15\textwidth}
\raggedleft
\textsf{Problem}:
\end{minipage}
\begin{minipage}[t]{0.5\textwidth}
\end{minipage}
\begin{minipage}[t]{0.80\textwidth}
\raggedright
\threeplong{} (\threep{})
\end{minipage}

\noindent
\begin{minipage}[t]{0.15\textwidth}
\raggedleft
\textsf{Instance}:
\end{minipage}
\begin{minipage}[t]{0.5\textwidth}
\end{minipage}
\begin{minipage}[t]{0.80\textwidth}
\raggedright
A positive integer $B$, and a multiset $A = \{a_1, a_2, \ldots, a_{3m}\}$ of $3m$ natural numbers with $B/4 < a_i < B/2$ ($1 \leq i \leq 3m$).
\end{minipage}

\noindent
\begin{minipage}[t]{0.15\textwidth}
\raggedleft
\textsf{Question}:
\end{minipage}
\begin{minipage}[t]{0.5\textwidth}
\end{minipage}
\begin{minipage}[t]{0.80\textwidth}
\raggedright
Can $A$ be partitioned into $m$ disjoint subsets $A_1, A_2, \ldots, A_m$, such that each $A_j$ ($1 \leq j \leq m$) contains exactly $3$ elements of $A$, whose sum is $B$?
\end{minipage}
\\
\\
\noindent
We recall that \threep{} is a \emph{strongly} \nph{} problem~\cite{gj-cigtnpc-79}, i.e., it remains \nph{} even if $B$ is bounded by a polynomial in $m$. Also, a trivial necessary condition for the existence of a solution is that $\sum_{i=1}^{3m} a_i = mB$, therefore it is not restrictive to consider only instances satisfying this equality.

We first give an overview of this reduction, then we describe in detail the construction for transforming an instance of \threep{} into an instance $\langle G_1, G_2 \rangle$ of \gracsimdrawing{}, and finally we prove that an instance of \threep{} is a \emph{Yes}-instance if and only if the transformed instance $\langle G_1, G_2 \rangle$ admits a \gracsimdraw{}.
\medskip


\begin{figure}[!tb]
\centering
\subfigure[Subdivided pumpkin]{
  \label{fi:gracsim-pumpkin}  
  \includegraphics[width=0.56\textwidth]{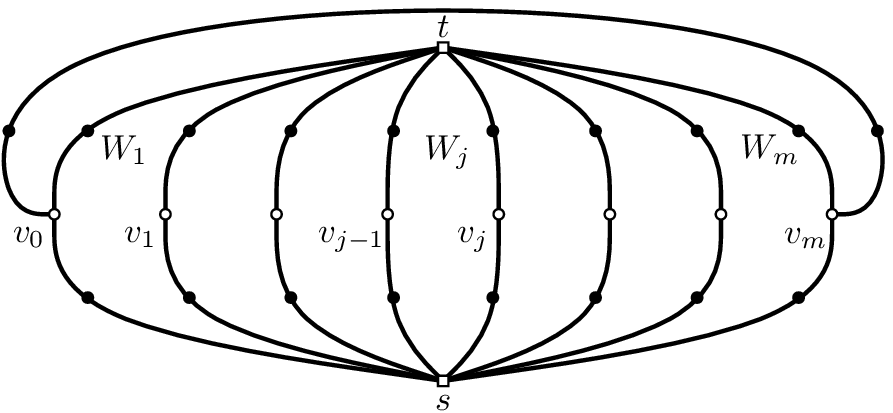}
}
\hspace{2mm}
\subfigure[Subdivided slice]{
  \label{fi:gracsim-slice}
  \includegraphics[width=0.24\textwidth]{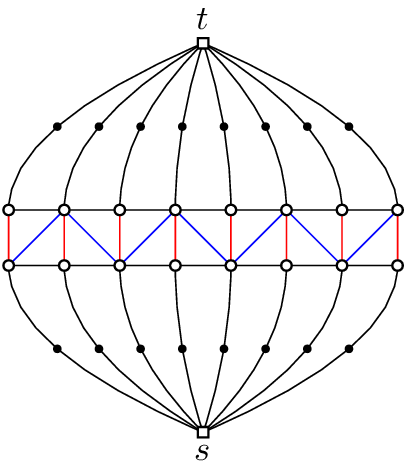}
}
\subfigure[Wedge $W_j$, transversal path $\pi_j$, and subdivided slices $S_{j1}$, $S_{j2}$, and $S_{j3}$]{
  \label{fi:gracsim-wedgezoom}
  \includegraphics[width=0.99\textwidth]{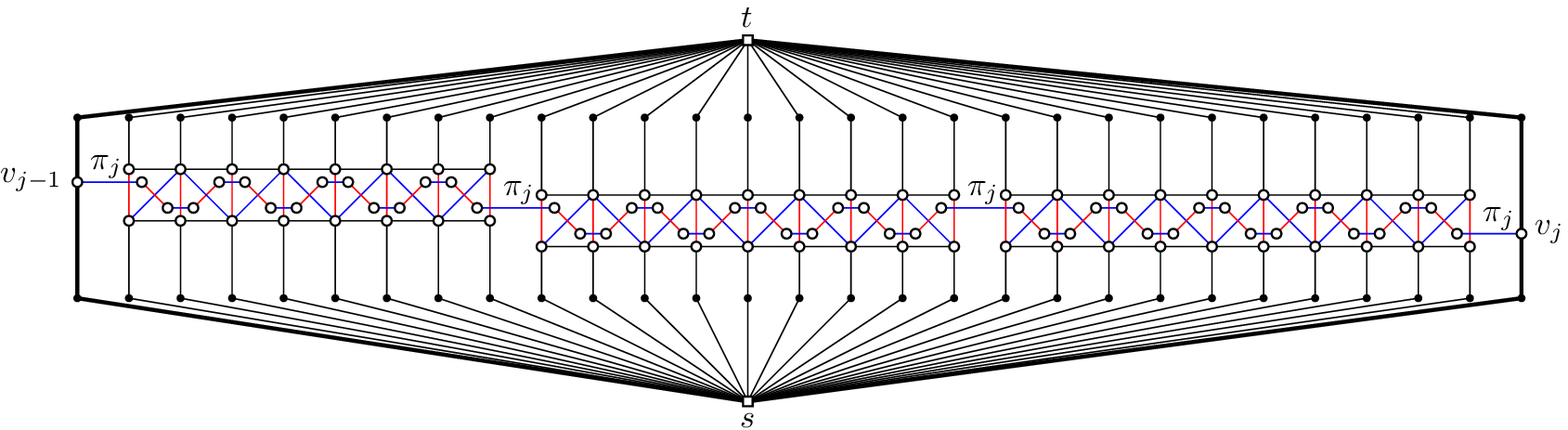}
}
\subfigure[\textsc{GRacSim} drawing]{
  \label{fi:gracsim-drawing}
  \includegraphics[width=0.99\textwidth]{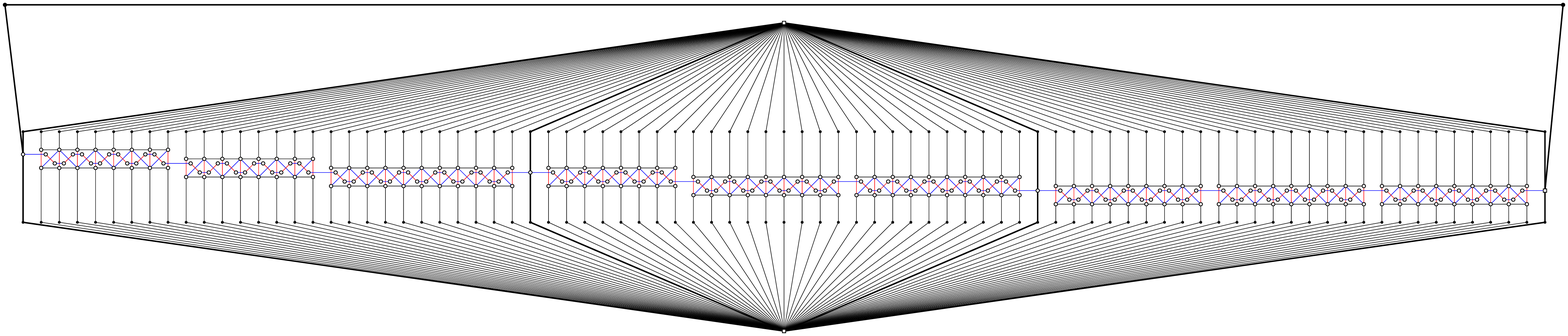}
}
\caption{
Illustration of (a) a subdivided pumpkin gadget and of (b) a subdivided slice gadget encoding integer $7$.
(c) A wedge $W_j$ of width~$24$, its transversal path $\pi_j$, and subdivided slices $S_{j1}$, $S_{j2}$, and $S_{j3}$ encoding integers~$7$,~$8$ and $9$, respectively.
Shared edges are colored black, those of the subdivided pumpkin with thick lines, while private edges of $G_1$ and of $G_2$ are colored blue and red, respectively.
(d) A (vertically stretched) \gracsimdraw{} of the transformed instance of \threep{}, when $m = 3$, $B = 24$ and $A = \{7,7,7,8,8,8,8,9,10\}$. Subdivided slices are drawn within wedges according to the following solution of \threep{}: $A_1 = \{7,7,10\}$, $A_2 = \{7,8,9\}$ and $A_3 = \{8,8,8\}$.
}
\label{fi:gracsim-pumpkin-tpath-slice-wedgezoom}
\end{figure}

\noindent
\textsc{Overview} The transformed instance $\langle G_1, G_2 \rangle$ of \gracsimdrawing{} is obtained by combining a \emph{subdivided pumpkin} gadget with $3m$ \emph{subdivided slice} gadgets and $m$ \emph{transversal paths}; see Fig.~\ref{fi:gracsim-pumpkin-tpath-slice-wedgezoom} for an illustration.
A \emph{pumpkin} gadget consists of a biclique $K_{2,m+1}$ plus an additional edge, called the \emph{handle}, that connects two vertices of the partite set of cardinality $m+1$; the two vertices of the other partite set are the \emph{poles of the pumpkin}.
A \emph{subdivided pumpkin} is a \emph{pumpkin} where each edge, other than the handle, is subdivided exactly once, while the handle is subdivided twice.
We remark that it is not strictly necessary to use a subdivided pumpkin instead of a normal pumpkin, the only reason is to exploit the subdivision vertices as bend points, in this way we get more readable and compact \gracsimdraws{}.
Hereafter, when it is not ambiguous, we will use the terms \emph{pumpkin} and \emph{slice} in place of \emph{subdivided pumpkin} and of \emph{subdivided slice}, respectively. 
All the edges of a pumpkin are \emph{shared} edges, that is, they belong to both graphs, therefore they cannot be crossed in any \gracsimdraw{}.
Moreover, any planar embedding of a subdivided pumpkin contains exactly two faces of degree seven and $m$ faces of degree eight, the latter are called $wedges$ and are the only faces incident to both poles.
Wedges are used to contain (\emph{subdivided}) \emph{slice} gadgets, which are $3m$ subgraphs attached to the two poles of the pumpkin, with no other vertices in common with each other and with the pumpkin.
Every slice has a ``width'' that suitably encodes a distinct element $a_i$ of $A$---recall that two distinct elements could be equal---and the structure of a slice is sufficiently ``rigid'' so that overlaps and nestings among slices cannot occur in a \gracsimdraw{}.

The basic idea of the reduction is to get the subsets $A_j$ ($1 \leq j \leq m$) of a solution of \threep{}, in case one exists, by looking at the slices in each wedge of a \gracsimdraw{}, which implies that every wedge must contain exactly three slices whose widths sum to $B$.
Of course, without introducing some further gadget, each wedge could contain even all slices, i.e. its width can be considered unlimited.
Hence, in order to make all wedges of the same width $B$, $m$ \emph{transversal paths} are attached to the pumpkin, one for each wedge.
Precisely, a \emph{transversal path} is an alternating path that connects the two vertices of a wedge other than the poles and the subdivision vertices, and it contains only non-shared edges that belong alternatively to $G_1$ and to $G_2$.
Therefore, the pumpkin plus the transversal paths form a subdivision of a maximal planar graph, which has a unique embedding (up to a choice of the external face).
Further, every transversal path has an ``effective length'' that encodes the integer $B$, which also establishes the width of the corresponding wedge.
Crossings between slices and transversal paths are thus unavoidable in a \gracsimdraw{}, because every transversal path splits its wedge into two parts, separating the two poles of the pumpkin; clearly, every slice crosses only one transversal path.
However, by choosing a suitable structure for the slices, it is possible to form only crossings that are allowed in a \gracsimdraw{}.
The key factor of the reduction is to make it possible if and only if each slice of width $a_i$ can cross a portion of its transversal path with an effective length greater than or equal to $a_i$.
In other words, the slice structure and the transversal path effective length are defined in such a way that, in a \gracsimdraw{}, (\emph{i}) every transversal path cannot cross more than three slices, and (\emph{ii}) the total width of the slices crossed by a same transversal path equals integer $B$, which yields a solution of \threep{}.

\noindent
\textsc{Construction} We now describe in detail a procedure to incrementally construct an instance $\langle G_1, G_2 \rangle$ of \gracsimdrawing{} starting from an instance of \threep{}.
At each step, this procedure adds one or more subgraphs (gadgets) to the current pair of graphs. 
As $G_1$ and $G_2$ have the same vertex set, for each added subgraph we will only specify which edges are shared and which are exclusive; the final vertex set will be known implicitly.

Start with a biclique $K_{2,m+1}$, and denote by $s,t$ and by $v_0,v_1, \ldots, v_m$ its vertices of the partite sets of cardinality~$2$ and $m+1$, respectively.
Add edge $h = (v_0,v_m)$ to the biclique, subdivide $h$ twice, and denote by $\pi_h$ the resulting $3$-edge path.
Then, for every $0 \leq j \leq m$, subdivide edge $(s, v_j)$ ($(t, v_j)$, respectively) exactly once, denote the subdivision vertex by $v_j^s$ ($v_j^t$, respectively) and the $2$-edge path obtained from this subdivision by $\pi_s(j)$ ($\pi_t(j)$, respectively).
The resulting graph $G_p$ is the \emph{subdivided pumpkin} and all its edges are shared edges, i.e. $G_p \subset G$; vertices $s$ and $t$ are the \emph{poles of the pumpkin}, while $\pi_h$ is called the \emph{subdivided handle} of the pumpkin.

Connect each pair of vertices $v_{j-1},v_j$ ($1 \leq j \leq m$) of $G_p$ with a \emph{transversal path} $\pi_j$, consisting of $2B + 1$ non-shared edges, so that edges in odd positions (starting from $v_{j-1}$) are private edges of $G_1$, while those in even positions are private edges of $G_2$; hence, every transversal path starts and ends with an edge of $G_1$ and has exactly $2B$ inner vertices.
Integer $B$ represents the \emph{effective length} of a transversal path, which is defined as half the number of its inner vertices.

For each integer $a_i \in A$, ($1 \leq i \leq 3m$) construct a (\emph{subdivided}) \emph{slice} $S_i$ by suitably attaching two fan subgraphs and by subdividing a subset of their edges as follows (see,~e.g.,~Fig.\ref{fi:gracsim-slice}).
Add a fan of $a_i + 2$ vertices with apex at pole $t$ and subdivide every edge incident to $t$ exactly once; denote the resulting subdivided fan by $F^t_i$.
Specularly, add a subdivided fan $F^s_i$ with apex at the other pole $s$, having the same number of vertices as $F^t_i$.
All the edges of $F^s_i$ and $F^t_i$ are shared edges, i.e. $F^s_i \cup F^t_i \subset G$.
Now, let $\pi^t_i$ and $\pi^s_i$ be the two paths of these fans, i.e. $\pi^t_i = F^t_i \setminus \{t\}$ and $\pi^s_i = F^s_i \setminus \{s\}$.
Visit path $\pi^t_i$ starting from one of its end-vertices and denote the $k$-th encountered vertex by $\pi^t_i(k)$ ($1 \leq k \leq a_i + 1$); in an analogous way define the $k$-th vertex $\pi^s_i(k)$ of path $\pi^s_i$.
For each $1 \leq k \leq a_i + 1$, connect $\pi^s_i(k)$ to $\pi^t_i(k)$ with a private edge of $G_2$.
Further, for each $1 \leq k \leq a_i$, add a private edge of $G_1$ joining either $\pi^s_i(k)$ to $\pi^t_i(k+1)$ or $\pi^t_i(k)$ to $\pi^s_i(k+1)$ depending on whether $k$ is odd or even, respectively.
We conclude this construction by introducing the concepts of \emph{tunnel} and of \emph{width} of a slice.
The \emph{tunnel} $\Delta_i$ is the subgraph of $S_i$ induced by the vertices of $\pi^t_i$ and $\pi^s_i$, i.e. $\Delta_i = S_i[V(\pi^s_i) \cup V(\pi^t_i)]$.
It is straightforward to see that every tunnel is a biconnected internally-triangulated outer-plane graph, its weak dual is a path, and it contains exactly $2a_i$ triangles.
The \emph{width} $w(S_i)$ of a slice $S_i$ is defined as half the number of triangles in its tunnel.

It is not difficult to see that the transformed instance of \gracsimdrawing{} contains $6Bm + 21m + 7$ vertices and $10Bm + 20m + 7$ edges, therefore its construction can be performed in polynomial time.
We observe that the common subgraph is not connected.
Indeed, $G$ consists of the pumpkin $G_p$ along with all fans and all inner vertices of the transversal paths; thus, there are $2Bm$ isolated vertices in the common subgraph.
Moreover, even $G_1$ and $G_2$ are not connected, because in addition to $G$ they also contain their own private edges of slices $S_i$ ($1 \leq i \leq 3m$) and those of transversal paths $\pi_j$ ($1 \leq j \leq m$); in particular, due to the latter paths, $G_1$ and $G_2$ contain an induced matching of $(B - 1)m$ and $Bm$ (private) edges, respectively.

\noindent
\textsc{Correctness} We now prove that a \emph{Yes}-instance of \threep{} is transformed into a \emph{Yes}-instance of \gracsimdrawing{}, and vice-versa.

($\Rightarrow$) Let $A$ be a \emph{Yes}-instance of \threep{}, we show how to compute a \gracsimdraw{} of the transformed instance $\langle G_1, G_2 \rangle$ on an integer grid; it suffices to compute the vertex coordinates, because edges are represented by straight-line segments.
The drawing construction strongly relies on the concepts of \emph{square cell} and of \emph{cell array}.
A \emph{square cell}, or briefly a \emph{cell}, is a $4 \times 4$ square, with corners at grid points, and with opposite sides that are either horizontal or vertical.
The diagonal of a cell connecting the bottom-left (top-left, respectively) and the top-right (bottom-right, respectively) corners is called the \emph{positive-slope diagonal} (\emph{negative-slope diagonal}).
The \emph{center} of a cell is the intersection point of its diagonals, which meet at right angles.
Every cell contains four special grid points, called \emph{anchor points}, which are the corners of a $2 \times 2$ square having the same center as the cell; two anchor points lie on the positive-slope diagonal while the other two are on the negative-slope diagonal.
A \emph{horizontal cell array} $CA$ of \emph{length} $l > 0$ is an ordered sequence $c_1, c_2, \ldots, c_l$ of $l$ cells such that any two consecutive cells $c_p$, $c_{p+1}$ ($1 \leq p < l$) share a vertical side; namely, the right side of $c_p$ coincide with the left side of $c_{p+1}$.

Consider now a solution $\{A_1, A_2, \ldots, A_m\}$ of \threep{} for the instance $A$.
For each triple $A_j$ ($1 \leq j \leq m$), denote its elements by $a_{j1}$, $a_{j2}$, $a_{j3}$, i.e. $A_j = \{a_{j1}, a_{j2}, a_{j3}\} \subset A$, and denote by $S_{j1}$, $S_{j2}$ and $S_{j3}$, and by $\Delta_{j1}$, $\Delta_{j2}$ and $\Delta_{j3}$, the corresponding slices and their tunnels in the transformed instance.
Embed each tunnel $\Delta_{jk}$ ($1 \leq k \leq 3$) on a horizontal array $CA_{jk}$ of length $a_{jk}$ in such a way that the private edges of $G_2$ are represented by the vertical sides of cells in $CA_{jk}$.
The private edges of $G_1$ are thus embedded on a sequence of $a_{jk}$ cell diagonals, whose slopes are alternately $+1$ (positive-slope diagonal) and $-1$ (negative-slope diagonal), starting from $+1$; hence, in every cell, the anchor points of one of the two diagonals are \emph{occupied}, i.e. they overlap with a straight-line segment representing a private edge of $G_1$, while the remaining two anchor points are (still) \emph{free}.

Place cell arrays $CA_{jk}$ one after another, from left to right, in increasing order of $j = 1, 2, \ldots, m$ and, in case of ties, in increasing order of $k = 1, 2, 3$.
Also, leave a horizontal gap of one cell between intra-partition consecutive arrays and a horizontal gap of two cells in case of inter-partition consecutive arrays.
Concerning the vertical placement proceed as follows.
Let $CA$ and $CA'$ be two arbitrary consecutive arrays (intra- or inter-partition), with $CA$ to the left of $CA'$.
If $CA$ has an even length, then $CA$ and $CA'$ are top- and bottom-aligned along the vertical axis, while if $CA$ has an odd length, then $CA'$ is shifted down of half a cell with respect to $CA$.
It follows that the rightmost free anchor point of $CA$ is always horizontally aligned with the leftmost free anchor point of $CA'$.
Now, let $R$ be the smallest rectangle containing all previous cell arrays with a top, right, bottom, and left margin of one cell.
Place pole $t$ ($s$, respectively) at a grid point above (under, respectively) the top side (bottom side, respectively) of $R$, as close as possible to its vertical bisector line, leaving a vertical offset of two cells; in Fig.~\ref{fi:gracsim-drawing} we deliberately increased this offset to get a better aspect ratio.
Place vertex $v_j$ ($0 \leq j < m$) at the grid point that is horizontally aligned with and to the left of the first free anchor point of $CA_{j1}$, leaving a margin of one cell; also, place vertex $v_m$ at the grid point that is horizontally aligned with and one-cell to the right of the rightmost free anchor point.
Observe that $v_0$ and $v_m$ lie on the left and right side of $R$, respectively.
Now, embed the vertices $v_j^t$ and $v_j^s$ ($j = 0, 1, \ldots, m$) of the pumpkin $G_p$ along the top and bottom side of $R$, respectively, in such a way that they are vertically aligned with $v_j$.
Then, embed the missing vertices of the slices in an analogous way, that is a vertex adjacent to $t$ ($s$, respectively) must be vertically aligned with its neighbor in the tunnel and must lie along the top side (bottom side, respectively) of $R$.
Concerning the handle $\pi_h$, place its subdivision vertex adjacent to $v_0$ at the point whose $x$- and $y$-coordinates are one cell to the left of $v_0$ and one cell above $t$, respectively; with a symmetrical argument choose the position of the other subdivision vertex of $\pi_h$.
It is not hard to see that (\emph{i}) no crossing has been introduced so far; (\emph{ii}) slices $S_{j1}$, $S_{j2}$ and $S_{j3}$ are within wedge $W_j$ ($1 \leq j \leq m$); and (\emph{iii}) every triangle in a tunnel contains exactly one free anchor vertex.
To complete the drawing, it remains to embed the inner vertices of transversal paths, taking into account that every path $\pi_j$ will unavoidably cross the three slices in its edge $W_j$.
Place these vertices at the free anchor points, so that the $p$-th inner vertex of $\pi_j$ occupies the $p$-th free anchor point, from left to right.
It turns out that the produced crossings will always occur at right angles and involve a private edge of $G_1$ and a private edge of $G_2$.
Note that this is possible because, by construction, $w(W_j) = B = w(S_{j1}) + w(S_{j2}) + w(S_{j2})$, where $w(W_j)$ is the \emph{width} of wedge $W_j$, which is defined as the effective length of $\pi_j$.
Indeed, $\pi_j$ has $2B$ inner vertices, there are $2(a_{j1} + a_{j2} + a_{j3})$ free anchor points in $W_j$, and $a_{j1} + a_{j2} + a_{j3} = B$, since we start from a solution of \threep{}.

($\Leftarrow$) Let $\langle \Gamma_1, \Gamma_2 \rangle$ be any \gracsimdraw{} of $\langle G_1, G_2 \rangle$, and let $\Gamma_p$ be the drawing of $G_p$ induced by $\langle \Gamma_1, \Gamma_2 \rangle$.
Also, let $C_j \subset G_p$ ($1 \leq j \leq m$) be the cycle consisting of paths $\pi_s(j-1)$, $\pi_t(j-1)$, $\pi_t(j)$ and $\pi_s(j)$.
We first claim that the following invariants are satisfied.
(\emph{I1}) $C_j$ ($1 \leq j \leq m$) is the boundary of a wedge $W_j$ in $\Gamma_p$, where a wedge is a bounded or unbounded face of degree eight in $\Gamma_p$.
(\emph{I2}) Transversal path $\pi_j$ ($1 \leq j \leq m$) is drawn within wedge $W_j$.
(\emph{I3}) Any two slices cannot be contained one in another and do not overlap with each other except at poles $s$ and $t$.
(\emph{I4}) Every edge of $\pi_j$ ($1 \leq j \leq m$) crosses at most one edge of a same slice.
(\emph{I5}) Every wedge contains exactly three slices.

Let $R_b(C_j)$ and $R_u(C_j)$ be the bounded and the unbounded plane regions, respectively, delimited by $C_j$ in $\Gamma_p$.
Since $v_{j-1}$ and $v_j$ are two vertices of $C_j$, path $\pi_j$ has to be drawn within either $R_b(C_j)$ or $R_u(C_j)$, otherwise an inner edge of $\pi_j$ would cross an edge of $C_j$, which is not allowed in a \gracsimdraw{} of $\langle G_1, G_2 \rangle$ because $C_j \subset G$.
Also, if $\pi_j$ is contained in $R_b(C_j)$ ($R_u(C_j)$, respectively), then all the other paths of the pumpkin that connect the two poles $s$ and $t$ must be drawn within $R_u(C_j)$ ($R_b(C_j)$, respectively).
Invariants \emph{I1} and \emph{I2} are thus satisfied.
Concerning invariant \emph{I3}, it is immediate to see that any two slices cannot be contained one in another.
Further, in case of overlap, an edge $e_1$ of a slice $S_1$ would cross a boundary edge $e_2$ of a slice $S_2$, where $e_2$ is a private edge of $G_2$ and $e_1$ is a private edge of $G_1$.
But this is not possible, because the end-vertices of $e_1$ are also connected in $S_1$ by a $2$-edge path consisting of a shared edge and of a private edge of $G_2$.
Invariant \emph{I4} holds because every transversal path $\pi_j$ ($1 \leq j \leq m$) can only cross edges of tunnels in $W_j$, and every tunnel is drawn as a straight-line internally triangulated outer-plane graph.
Therefore, $\pi_j$ cannot enter and then exit from a triangle with a same private edge in such a way that all edge crossings are at right angles.
Namely, every triangle of a tunnel in $W_j$ takes at least one inner vertex of $\pi_j$.
We now show that invariant \emph{I5} is satisfied.
It is straightforward to see that every slice must be drawn within some wedge, and all the slices in a wedge $W_j$ are crossed by its transversal path $\pi_j$.
In particular, $\pi_j$ has to pass through the tunnels of these slices and such tunnels are pairwise disjoint and none of them contains another.
Suppose by contradiction that invariant \emph{I5} does not hold.
Then, there would be a wedge $W_p$ ($1 \leq p \leq m$) containing at least four slices; recall that there are $3m$ slices to be distributed among $m$ wedges.
Let us denote such slices by $S_{p1}, S_{p2}, \ldots, S_{pk}$, with $k \geq 4$, and let $a_{pl} \in A$ be the integer encoded by slice $S_{pl}$ ($1 \leq l \leq k$).
Since each element of $A$ is strictly greater than $B/4$, it follows that $\sum_{l = 1}^{k} w(S_{pl}) = \sum_{l = 1}^{k} a_{pl} > \sum_{l = 1}^{k} B/4 \geq B = w(W_p)$, thus wedge $W_p$ is not wide enough to host all its slices, a contradiction.
In other words, the alternating path $\pi_p$ does not have enough inner vertices to pass through all the tunnels of slices in $W_p$ avoiding crossing that are not allowed in a \gracsimdraw{}.\\
Now, for each wedge $W_j$ ($1 \leq j \leq m$), denote by $S_{j1}$, $S_{j2}$ and $S_{j3}$ the three slices that are within $W_j$, and let $a_{j1}$, $a_{j2}$ and $a_{j3}$ be their corresponding elements of $A$. 
We claim that $a_{j1} + a_{j2} + a_{j3} = B$.
Indeed, it cannot be $\sum_{k = 1}^{3} a_{jk} > B$, because it would imply that $\sum_{k = 1}^{3} w(S_{jk}) > w(W_j)$, which is not possible as seen above.
On the other hand, if $\sum_{k = 1}^{3} a_{jk} < B$, there would be some $j' \neq j$ with $1 \leq j' \leq m$ such that $\sum_{k = 1}^{3} a_{j'k} > B$, otherwise $\sum_{i = 1}^{3m} a_i$ would be strictly less than $mB$, which violates our initial hypothesis on the elements of $A$.
Hence, even this case is not possible.
In conclusion, every wedge $W_j$ ($1 \leq j \leq m$) contains exactly three slices $S_{j1}$, $S_{j2}$ and $S_{j3}$, each of these slices has a width $w(S_{jk})$ ($1 \leq k \leq 3$) that encodes a distinct element of $A$, and the sum of these widths is equal to $B$, i.e. $w(S_{j1}) + w(S_{j2}) + w(S_{j3}) = B$.
Therefore, the partitioning of $A$ defined by $A_1, A_2, \ldots, A_m$, where $A_j = \{w(S_{j1}), w(S_{j2}), w(S_{j3})\}$, is a solution of \threep{} for the instance $A$. \hfill$\qed$
\end{proof}

\noindent
We conclude this section with two remarks.

\begin{remark}
It is not hard to see that this reduction can also be used to give an alternative proof for the \nphness{} of \textsc{SGE}, which was proved by Estrella-Balderrama et al.~\cite{egjpss-sgge-07}.
\end{remark}

\begin{remark}
It is not clear whether this reduction can be adapted to study the complexity of the \emph{one bend extension} of \textsc{GRacSim}, i.e. the variant of \textsc{GRacSim} in which one bend per edge is allowed; we leave this question as open problem.
\end{remark}

\section{\npcness{} of \ksefe{}}
\label{sec:ksefenpc}

In order to increase the readability of a simultaneous embedding, which is particularly desired in graph drawing applications, one may wonder whether it is possible to compute a \sefe{}, where every private edge receives at most a limited and fixed number of crossings.
We recall that there is no restriction on the number of crossings that involve a private edge in a \sefe{} drawing.
Further, two private edges may cross more than once, and these multiple crossings could be necessary for the existence of a simultaneous embedding; however, Frati et al.~\cite{fhk-sefbc-15} have shown that whenever two planar graphs admit a \sefe{}, then they also admit a \sefe{} with at most sixteen  crossings per edge pair.

Motivated by the previous considerations, we introduce and study the complexity of the following problem, named \ksefe{}, where $k$ denotes a fixed bound on the number of crossings per edge that are allowed.\\
\\
\noindent
\begin{minipage}[t]{0.15\textwidth}
\raggedleft
\textsf{Problem}:
\end{minipage}
\begin{minipage}[t]{0.5\textwidth}
\end{minipage}
\begin{minipage}[t]{0.80\textwidth}
\raggedright
\ksefe{}
\end{minipage}

\noindent
\begin{minipage}[t]{0.15\textwidth}
\raggedleft
\textsf{Instance}:
\end{minipage}
\begin{minipage}[t]{0.5\textwidth}
\end{minipage}
\begin{minipage}[t]{0.80\textwidth}
\raggedright
Two planar graphs $G_1 = (V, E_1)$ and $G_2 = (V, E_2)$, sharing a common subgraph $G = (V, E) = (V, E_1 \cap E_2)$, and a positive integer $k$.
\end{minipage}

\noindent
\begin{minipage}[t]{0.15\textwidth}
\raggedleft
\textsf{Question}:
\end{minipage}
\begin{minipage}[t]{0.5\textwidth}
\end{minipage}
\begin{minipage}[t]{0.80\textwidth}
\raggedright
Do $G_1$ and $G_2$ admit a \sefe{} such that every private edge receives at most $k$ crossings?
\end{minipage}
\\
\\
\noindent
It is straightforward to see that \ksefe{} is, in general, a restricted version of \sefe{}.
Namely, for any positive integer $k$, it is easy to find pairs of graphs that admit a \textsc{(k+1)-SEFE}, and thus a \sefe{}, but not a \ksefe{}.
For example, consider a pair of graphs $G_1 = (V, E_1)$ and $G_2 = (V, E_2)$ defined as follows (an illustration for $k = 4$ is given in Fig.~\ref{fi:kseferestriction}).
The common subgraph $G = (V, E)$ is a wheel of $2k + 5$ vertices, where $u_0$, $u_1, \ldots u_{k+1}, v_0$, $v_1, \ldots, v_{k+1}$ are the  $2(k + 2)$ vertices of its cycle in clockwise order, $E_1 = E \cup \{(u_0, v_0)\}$, and $E_2 = E \cup \bigcup_{i = 1}^{k + 1}\{(u_i, v_{k+2-i})\}$.
Since $G$ has a unique planar embedding (up to a homomorphism of the plane), the private edge $(u_0, v_0)$ of $G_1$ crosses all the $k+1$ private edges of $G_2$, i.e. all the edges $(u_i, v_{k+2-i})$ with $1 \leq i \leq k+1$.
Therefore, $G_1$ and $G_2$ admit a \textsc{(k+1)-SEFE}, and thus a \sefe{}, but not a \ksefe{}.

\begin{figure}[t]
\centering
\includegraphics[width=0.60\textwidth]{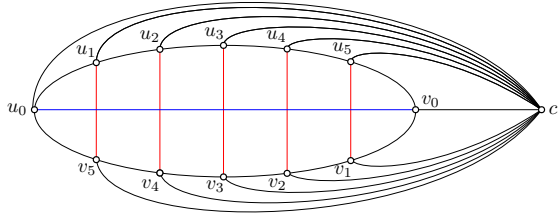}
\caption{
A pair of graphs that admit a \ksefe{} only for $k \geq 5$.
}
\label{fi:kseferestriction}
\end{figure}

\begin{theorem}\label{th:1sefenph}
\onesefe{} is~\nph{}.
\end{theorem}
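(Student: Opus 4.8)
The plan is to adapt the \gracsimdrawing{} reduction of Theorem~\ref{th:gracsimnph} to prove \nphness{} of \onesefe{} by a reduction from \threep{} as well. The key conceptual observation is that the two reductions share the same skeleton: a shared subdivided pumpkin whose wedges must be filled by slice gadgets, transversal paths of prescribed effective length that force a width constraint, and a correspondence between valid drawings and \threep{} solutions. What changes is the mechanism that makes each slice ``rigid'' and that caps the number of crossings a transversal path can absorb. In the geometric setting this rigidity came from the right-angle-crossing requirement together with the straight-line tunnel structure; for \onesefe{} I must instead enforce it combinatorially, using the bound $k=1$ on crossings per private edge.

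First I would reuse the pumpkin gadget verbatim: start from $K_{2,m+1}$ with poles $s,t$, add and subdivide the handle, and subdivide the spoke edges, so that every planar embedding yields exactly $m$ wedge faces incident to both poles, all edges shared. The transversal paths $\pi_j$ are again alternating paths of private edges, of effective length $B$, connecting $v_{j-1}$ to $v_j$; as before each $\pi_j$ must lie inside its wedge $W_j$ and separates the poles, so every slice placed in $W_j$ is crossed by $\pi_j$. The crucial redesign is the slice gadget $S_i$ encoding $a_i$. I would choose a slice whose ``tunnel'' consists of $2a_i$ triangular shared faces arranged so that, because each private edge of the gadget may be crossed at most once, the transversal path is forced to consume at least $2a_i$ inner vertices (equivalently, at least $a_i$ units of effective length) to traverse it, and cannot re-enter a triangle it has already left. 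This is exactly the analogue of invariant (\emph{I4}) in the geometric proof, but now justified by the crossing bound $k=1$ rather than by the right-angle condition: an edge carrying a crossing is ``used up,'' so the path cannot double back.

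With the slice so defined, the correctness argument mirrors the geometric one. In the ($\Rightarrow$) direction, given a \threep{} solution I place the three slices of each triple inside the corresponding wedge and route $\pi_j$ through their $2B$ inner vertices, checking that every private edge receives exactly one crossing, so the drawing is a valid \onesefe{}. In the ($\Leftarrow$) direction I re-establish invariants \emph{I1}--\emph{I5}: the pumpkin forces each $\pi_j$ into its wedge, slices cannot nest or overlap (else a $G_1$-private edge would cross a $G_2$-private boundary edge, using up a crossing that a second slice edge would also require), and since each $a_i>B/4$ no wedge can hold four slices. The width-sum argument then gives $a_{j1}+a_{j2}+a_{j3}=B$ for every $j$, recovering a \threep{} partition. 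Membership in \np{} is immediate, since a \sefe{} drawing with bounded crossings has a polynomial-size combinatorial certificate.

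The main obstacle I expect is the slice redesign. In the geometric proof the right-angle constraint did double duty: it both forbade a path from backtracking through a triangle and forbade slices from overlapping. With only the crossing bound $k=1$ I must verify that no clever \sefe{} routing can cheat the width accounting—for instance by having a single private edge of the slice absorb the crossing while the path slips through on an unexpected side, or by exploiting the topological freedom of Jordan curves to nest slices without an illegal crossing. Getting the slice's internal triangulation and its shared/private edge assignment exactly right, so that traversal provably costs at least $a_i$ units of effective length under the one-crossing budget, is the delicate part; everything downstream (the counting in \emph{I5} and the final width equation) then follows as in Theorem~\ref{th:gracsimnph}.
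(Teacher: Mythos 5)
Your plan follows essentially the same route as the paper's proof: the same pumpkin-plus-transversal-path skeleton, with the right-angle rigidity replaced by the $k=1$ crossing budget, and the same counting and width-sum arguments for invariants \emph{I1}--\emph{I5}. The one piece you flag as delicate --- the exact slice gadget --- is resolved in the paper by taking an alternating path $\pi(S_i)$ of $2a_i$ private edges zigzagging between two shared fan paths (apexes $s$ and $t$), so that the tunnel's consecutive triangles share \emph{private} edges; the transversal path must then cross all $2a_i$ of them, each at most once, while each of its own $2B$ edges also absorbs at most one crossing, which yields exactly the inequality $\sum_l w(S_{jl}) \leq B$ you need.
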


\begin{proof}
We use a reduction from \threep{} similar to that in the proof of Theorem~\ref{th:gracsimnph}; subdivision vertices are now omitted, since we are no longer in a geometric setting.\\

\noindent
\textsc{Construction} Start with a (non-subdivided) pumpkin $G_p \subset G$ whose vertices $v_0$, $v_1, \ldots v_m$ are adjacent to the two poles $s$ and $t$, and whose handle is a single edge $(v_0, v_m)$.
Add a transversal path $\pi_j$ between every pair of vertices $v_{j-1}$ and $v_j$ ($1 \leq j \leq m$).
Differently from the proof of Theorem~\ref{th:gracsimnph}, $\pi_j$ has to contain $2B - 1$ inner vertices and not $2B$; the reason of this will be clarified later.
Also, the effective length of $\pi_j$ is now defined as half the number of its edges, hence it is still equal to $B$.
Slice gadgets $S_i$ ($1 \leq i \leq 3m$) and their tunnels $\Delta_i$ are also slightly modified and are defined as follows.
For each integer $a_i \in A$, create an alternating path $\pi(S_i)$ of $2a_i$ non-shared edges; thus, $\pi(S_i)$ has $2a_i + 1$ vertices and its extremal edges never belong to the same graph $G_i$ ($i = 1,2$).
Construct a fan $F_i^t$ by adding an edge between all the pairs of consecutive vertices of $\pi(S_i)$ in even positions and by connecting such vertices to the pole $t$ of the pumpkin; $F_i^t \setminus \{t\}$ is a path of $a_i - 1$ edges, because $\pi(S_i)$ has $a_i$ vertices in even positions and $a_i + 1$ vertices in odd positions.
Similarly, construct a fan $F_i^s$ by connecting the pole $s$ with a path of $a_i$ edges passing through all the vertices of $\pi(S_i)$ in odd positions.
Slice $S_i$ is composed from the two fans $F_i^t$ and $F_i^s$ plus all the edges of $\pi(S_i)$.
Further, all the edges of fans are shared, while those of $\pi(S_i)$ are not shared and belong alternatively to $G_2$ and to $G_1$.
The tunnel $\Delta_i$ of a slice $S_i$ is the subgraph that results from $S_i$ after removing the two poles $s$ and $t$, i.e $\Delta_i = S_i \setminus \{s,t\}$.
It is straightforward to see that every tunnel is a biconnected internally-triangulated outer-plane graph, whose weak dual is a path, and it contains exactly $2a_i - 1$ triangles if the corresponding slice encodes integer $a_i$.
The width $w(S_i)$ of a slice $S_i$ is defined as half the number of private edges in its tunnel $\Delta_i$, thus $w(S_i) = a_i$.

\begin{figure}[!tb]
\centering
\subfigure[Pumpkin gadget]{
  \label{fi:1-sefe-pumpkin}  
  \includegraphics[width=0.56\textwidth]{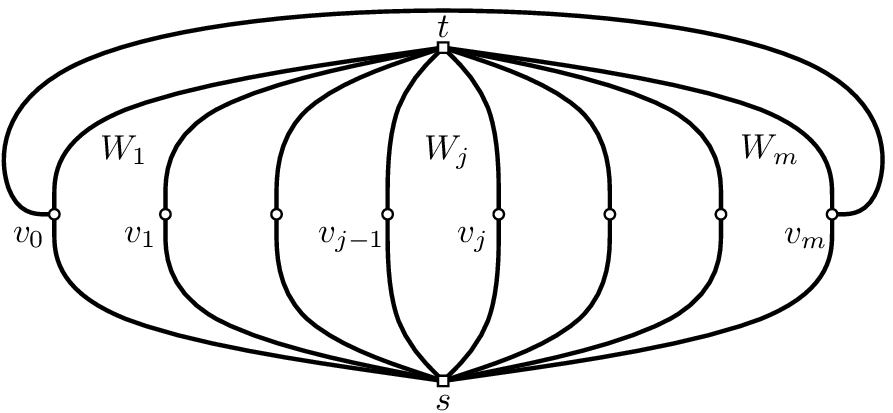}
}
\hspace{2mm}
\subfigure[Slice gadget]{
  \label{fi:1-sefe-slice}
  \includegraphics[width=0.30\textwidth]{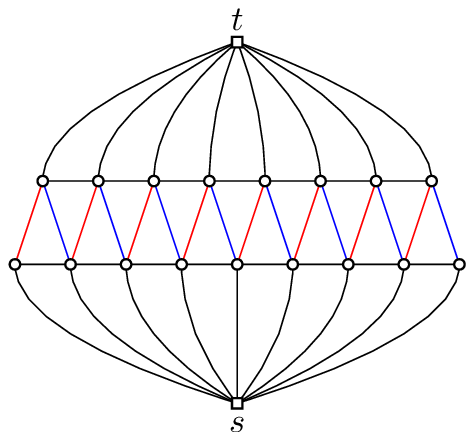}
}
\subfigure[Wedge $W_j$, transversal path $\pi_j$, and slices $S_{j1}$, $S_{j2}$, and $S_{j3}$]{
  \label{fi:1-sefe-wedgezoom}
  \includegraphics[width=0.99\textwidth]{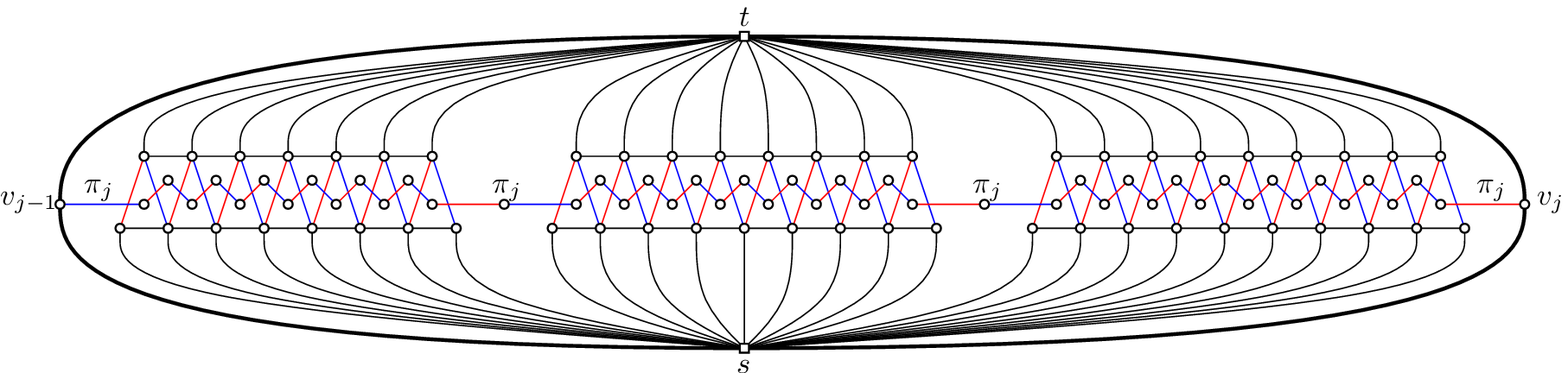}
}
\subfigure[\textsc{1-SEFE} drawing]{
  \label{fi:1-sefe-drawing}
  \includegraphics[width=0.99\textwidth]{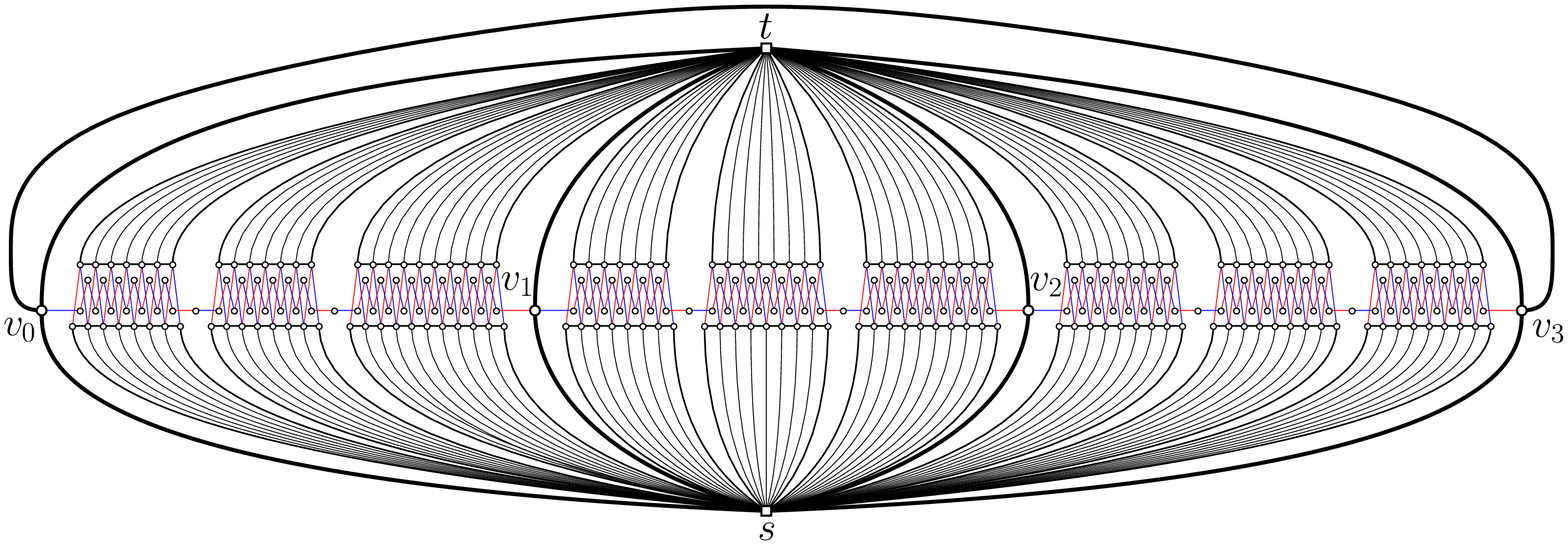}
}
\caption{
Illustration of (a) a pumpkin gadget and of (b) a slice gadget encoding integer $8$.
(c) A wedge $W_j$ of width~$24$, its transversal path $\pi_j$, and slices $S_{j1}$, $S_{j2}$, and $S_{j3}$ encoding integers~$7$,~$8$ and $9$, respectively.
Shared edges are colored black, those of the pumpkin with thick lines, while private edges of $G_1$ and of $G_2$ are colored blue and red, respectively.
(d) A \onesefe{} drawing of the transformed instance of \threep{}, when $m = 3$, $B = 24$ and $A = \{7,7,7,8,8,8,8,9,10\}$. Slices are drawn within wedges according to the following solution of \threep{}: $A_1 = \{7,7,10\}$, $A_2 = \{7,8,9\}$ and $A_3 = \{8,8,8\}$.
}
\label{fi:1-sefe-pumpkin-tpath-slice-wedgezoom}
\end{figure}

It is not hard to see that the transformed instance $\langle G_1, G_2 \rangle$ contains $4Bm + 9m + 3$ vertices and $8Bm + 2m + 3$ edges, thus its construction can be done in polynomial time.
Furthermore, we observe that $G$, $G_1$ and $G_2$ are not connected.
Indeed, $G$ contains $(2B - 1)m$ isolated vertices, i.e. all the inner vertices of transversal paths, while $G_1$ and $G_2$ contain an induced matching of $(B - 1)m$ (private) edges each.\\

\noindent
\textsc{Correctness} Let $A$ be an instance of \threep{}, and let $\langle G_1, G_2 \rangle$ be an instance of \onesefe{} obtained by using the previous transformation.
We show that $A$ admits a $3$-partition if and only if $\langle G_1, G_2 \rangle$ admits a \onesefe{} drawing.

($\Rightarrow$) Suppose that $A$ admits a $3$-partition $\{A_1, A_2, \ldots, A_m\}$, then a \onesefe{} drawing of $\langle G_1, G_2 \rangle$ can be constructed as follows.
Compute a plane drawing $\Gamma_p$ of the pumpkin $G_p$ (see, e.g., Fig.~\ref{fi:1-sefe-pumpkin}) such that (\emph{i}) the external face is delimited by the edges $(s, v_0)$, $(v_0, v_m)$ and $(v_m, s)$ and (\emph{ii}) for each $j = 1, 2, \ldots, m$ edge $(t, v_j)$ immediately follows edge $(t, v_{j-1})$ in the counterclockwise edge ordering around $t$.
$\Gamma_p$ contains $m$ inner faces of degree four, delimited by edges $(s, v_{j-1})$, $(v_{j-1}, t)$, $(t, v_j)$, $(v_j, s)$ ($1 \leq j \leq m$), which are the wedges $W_j$ of the pumpkin.
Consider now each triple $A_j = \{a_{j1}, a_{j_2}, a_{j3}\}$ ($1 \leq j \leq m$), and denote by $S_{j1}$, $S_{j2}$, $S_{j3}$ the corresponding slices in the transformed instance.
For each slice $S_{jk}$ ($1 \leq k \leq 3$), compute a plane drawing with both poles on the external face.
Place these drawings one next to the other within wedge $W_j$, in any order; for simplicity we may assume that $S_{j1}$ is the leftmost slice, $S_{j2}$ is the middle slice and $S_{j3}$ is the rightmost one.
Also, if necessary, flip each slice around its poles so that the leftmost private edge always belongs to $G_2$; clearly, this implies that the rightmost private edge belongs to $G_1$.
It is not difficult to see that the drawing produced so far is planar, i.e. even the private edges do not create crossings.
Moreover, since $w(W_j) = B = a_{j1} + a_{j2} + a_{j3} = w(S_{j1}) + w(S_{j2}) + w(S_{j3})$, every transversal path $\pi_j$ ($1 \leq j \leq m$) can be drawn within wedge $W_j$ in such a way that (\emph{i}) every edge of $\pi_j$ crosses exactly one private edge of a tunnel in $W_j$, and (\emph{ii}) every crossing involves a private edge of $G_1$ and a private edge of $G_2$.\\

($\Leftarrow$) We conclude the proof by showing that if $\langle G_1, G_2 \rangle$ admits a \onesefe{} drawing $\langle \Gamma_1, \Gamma_2 \rangle$, then $A$ admits a $3$-partition.
By a similar argument as that in the proof of Theorem~\ref{th:gracsimnph}, $\langle \Gamma_1, \Gamma_2 \rangle$ induces a plane drawing $\Gamma_p$ of the pumpkin $G_p$, in which each wedge $W_j$, i.e. each bounded or unbounded face of degree four of $G_p$, is delimited by a cycle $C_j$ consisting of edges $(s, v_{j-1})$, $(v_{j-1}, t)$, $(t, v_j)$ and $(v_j, s)$, for some $1 \leq j \leq m$.
Further, path $\pi_j$ has to be drawn within $W_j$, and for each $1 \leq i \leq 3m$, fans $F_i^t$ and $F_i^s$, and thus the slice $S_i$ they belong to must be placed within a same wedge.
Let $S_{j1}, S_{j2}, \ldots, S_{jk}$ be the slices within wedge $W_j$, for some $k \geq 0$.
Since every private edge receives at most one ($k = 1$) crossing in $\langle \Gamma_1, \Gamma_2 \rangle$, it follows that $\sum_{l = 1}^{k} w(S_{jl}) \leq w(W_j) = B$, i.e. the number of edges of $\pi_j$ must be greater than or equal to the number of edges of tunnels in $W_j$.
We now show that there are exactly three slices in every wedge, i.e. $k = 3$.
It cannot be $k > 3$, otherwise $\sum_{l = 1}^{k} w(S_{jl}) = \sum_{l = 1}^{k} a_{jl} > \sum_{l = 1}^{k} B/4 \geq B = w(W_j)$.
On the other hand, it cannot be $k < 3$, otherwise there would some other wedge with $k' > 3$ slices; recall that there are a total of $3m$ slices and a total of $m$ wedges.
Suppose now that $\sum_{l = 1}^{3} w(S_{jl}) < w(W_j) = B$, for some $1 \leq j \leq m$.
Then, there would exist some $j' \neq j$ with $1 \leq j' \leq m$ such that $\sum_{l = 1}^{3} w(S_{j'l}) > w(W_{j'}) = B$, otherwise it would be violated the equality $\sum_{i = 1}^{3m} a_i = mB$.
In conclusion, there are exactly three slices in every wedge, and the sum of their widths coincides with $B$.
Therefore the partitioning $A_1, A_2, \ldots, A_m$ of $A$, where $A_j = \{w(S_{j1}), w(S_{j2}), w(S_{j3})\}$, is a $3$-partition.\hfill$\qed$
\end{proof}

\begin{theorem}\label{th:ksefenpc}
For any fixed $k \geq 1$, \ksefe{} is~\npc{}.
\end{theorem}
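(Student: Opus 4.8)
The plan is to establish both membership in \np{} and \nphness{} for every fixed $k \ge 1$. Membership in \np{} is the only genuinely new ingredient with respect to Theorems~\ref{th:gracsimnph} and~\ref{th:1sefenph}, and it is the easy part: since $k$ is a fixed constant and every private edge may be crossed at most $k$ times, any feasible \sefe{} has at most $k\,(|E_1| + |E_2|)$ crossings in total, which is polynomial in the input size. Hence a solution can be certified combinatorially by its \emph{planarization} $H$ --- the plane graph obtained by replacing every crossing with a degree-four dummy vertex --- together with a labeling that records, for each vertex of $H$, whether it is real or a crossing and, for each edge of $H$, whether it originates from $E_1\setminus E$, from $E_2 \setminus E$, or from $E$. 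A guessed $H$ has polynomial size, and one verifies in polynomial time that it is planar, that suppressing the dummy vertices recovers two drawings $\Gamma_1$ and $\Gamma_2$ that coincide on $G$, and that every private edge carries at most $k$ dummy vertices. This yields \ksefe{} $\in \np{}$.

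For the \nphness{} I would reduce from \threep{} by reusing, almost verbatim, the pumpkin, the wedges, the slices and the transversal paths of the proof of Theorem~\ref{th:1sefenph}; as before, the subsets $A_j$ of a $3$-partition are read off from the slices packed inside each wedge $W_j$. The whole difficulty is that a budget of $k>1$ introduces slack. In the $1$-\sefe{} argument tightness came from the fact that a transversal-path edge and a tunnel edge can each participate in a single crossing, which forced the number of edges of $\pi_j$ to be at least the number of private tunnel edges contained in $W_j$, and hence $\sum_l w(S_{jl}) \le w(W_j) = B$. With a budget of $k$ this pigeonhole degrades to $\sum_l w(S_{jl}) \le kB$, which no longer isolates triples, so the reduction as stated collapses.

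To restore tightness I would force each transversal-path edge to spend $k-1$ of its $k$ crossings on \emph{unavoidable} crossings, leaving exactly one crossing free for the $3$-partition encoding. Concretely, I plan to ``inflate'' the construction so that every tunnel wall a path must traverse is replaced by a rigid bundle of $k$ parallel private edges (held together by shared edges, so that the bundle behaves as a single impassable wall of multiplicity $k$, each edge of the appropriate graph so that crossings remain between $E_1\setminus E$ and $E_2\setminus E$). Crossing one bundle then consumes the whole budget of a single path edge, so each path edge contributes at most one wall-crossing; since $\pi_j$ must still separate the two poles inside its wedge, every bundle must be crossed, and the counting of Theorem~\ref{th:1sefenph} carries over to give $\sum_{l} w(S_{jl}) = B$ per wedge, whence the $3$-partition. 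The forward direction is then routine: the explicit planar placement of the slices and the routing of each $\pi_j$ taken from the $1$-\sefe{} drawing still work, with every crossing now ``multiplied'' into a $k$-fold crossing that stays exactly within budget.

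The main obstacle, precisely as in the $(\Leftarrow)$ direction of Theorem~\ref{th:1sefenph}, is re-establishing the analogues of invariants \emph{I1}--\emph{I5} in this enriched setting --- in particular ruling out that one path edge splits its budget between \emph{two} distinct bundles (crossing each only partially), or that a path reroutes so as to pierce a tunnel while crossing fewer than its full set of walls. This is exactly where the rigidity of the bundles must be argued carefully: I would show that any partial crossing of a bundle leaves the two poles connected through its uncrossed edges, so that full separation of $s$ and $t$ still demands, per wall, a number of path crossings equal to the bundle multiplicity $k$. Recovering this exact pigeonhole bound is the crux of the reduction, and everything else then mirrors the $k=1$ case.
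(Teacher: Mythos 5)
Your proposal is correct and follows essentially the same route as the paper: membership in \np{} via a polynomial-size crossing certificate whose planarization is tested for planarity, and \nphness{} by reusing the \onesefe{} reduction with each private tunnel edge replaced by a multiplicity-$k$ bundle (the paper uses $k$ internally vertex-disjoint two-edge private paths, which avoids the multi-edge issue your ``parallel edges held together by shared edges'' hints at), so that traversing one wall exhausts a transversal-path edge's budget and the pigeonhole count $\sum_l w(S_{jl}) \le B$ is restored. The ``budget-splitting'' worry you flag as the crux is in fact absorbed by the aggregate count: each wall still needs at least $k$ crossings and $\pi_j$ supplies at most $2Bk$ in total, regardless of how individual edges apportion them.
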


\begin{proof}
Concerning the \nphness{}, it suffices to repeat the proof of Theorem~\ref{th:1sefenph}, by replacing every private edge $e$ of each tunnel of $G_i$ ($i=1,2$) with a set of $k$ internally vertex-disjoint paths $\pi_1(e)$, $\pi_2(e), \ldots, \pi_k(e)$, consisting each one of two private edges of $G_i$.

We now introduce some definitions and then prove the membership in \np{} using an approach similar to that described in~\cite{gj-cnnpc-83}.
An \emph{edge crossing structure} $\chi(e_1)$ of a private edge $e_1 \in E_1$ is a pair $\langle \varepsilon_2, \sigma(\varepsilon_2) \rangle$, where $\varepsilon_2$ is a multiset on the set $E_2 \setminus E$ with cardinality at most $k$, and $\sigma(\varepsilon_2)$ is a permutation of multiset $\varepsilon_2$.
A \emph{crossing structure} $\chi(G_1,G_2)$ of a pair of graphs $\langle G_1, G_2 \rangle$ is an assignment of an edge crossing structure to each private edge of $E_1$.
Of course, all crossing structures of $\langle G_1, G_2 \rangle$ can be non-deterministically generated in a time that is polynomial in $|V| = n$, and they include the crossing structures induced by all \ksefe{} drawings of $\langle G_1, G_2 \rangle$.
We conclude the proof by describing a polynomial time algorithm for testing whether a given crossing structure $\chi(G_1,G_2)$ is a crossing structure induced by some \ksefe{} drawing of $\langle G_1, G_2 \rangle$.
Let $G_{\cup}$ be the union graph of $G_1$ and $G_2$, i.e. $G_{\cup} = (V, E_1 \cup E_2)$.
For each edge $e$ of $G_{\cup}$ such that $e \in E_1 \setminus E$, consider its crossing structure $\chi(e) = \langle \varepsilon_2, \sigma(\varepsilon_2) \rangle$, replace every crossing between $e$ and the edges in $\varepsilon_2$ with a dummy vertex, preserving the ordering given by $\sigma(\varepsilon_2)$, and then test the resulting (multi) graph for planarity.\hfill$\qed$
\end{proof}

\noindent
We conclude even this section with two remarks.

\begin{remark}
The previous reduction cannot be successfully applied to \sefe{}, because of the \emph{$2$-edge penetration vulnerability}: every transversal path $\pi_j$ ($1 \leq j \leq m$) can pass through all the tunnels in $W_j$ using only its two first edges; an illustration of this vulnerability is given in Fig.~\ref{fi:2epv-ksefe}.
Also, any tentative to patch this vulnerability by replacing the transversal paths with different graphs, modifying the slices accordingly, always resulted in constructions in which overlapping slices were possible.
\end{remark}

\begin{remark}
From a theoretical point of view, it also makes sense to study a slightly different restriction of \sefe{}, where instead of limiting the number of crossings per edge, it is limited the number of distinct edges that cross a same private edge; recall that two private edges may cross each other more than once, which gives rise to a different problem than \ksefe{}.
We may call this problem \kpairsefe{}, because $k$ is now the bound on the allowed number of crossing edge pairs involving a same edge.
It is not hard to see that a reduction analogous to that given in the proof of Theorems~\ref{th:1sefenph}~and~\ref{th:ksefenpc} can be used to prove the \nphness{} of \kpairsefe{}.
The interesting theoretical aspect of \kpairsefe{} is the following: if $k$ is greater than or equal to the maximum number of edges of $G_i$ ($i=1,2$), then a \kpairsefe{} is also a \sefe{}; in particular, if $k \geq 3|V| - 6$ the two problems are identical.
\end{remark}

\begin{figure}[t]
\centering
\includegraphics[width=0.80\textwidth]{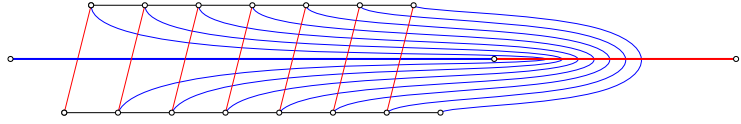}
\caption{
Illustration of the $2$-edge penetration vulnerability.
}
\label{fi:2epv-ksefe}
\end{figure}


\newpage
\section{Conclusions and Open Problems}
In this work we have shown the \nphness{} of the \gracsimdrawing{} problem, a restricted version of the \sge{} problem in which edge crossings must occur only at right angles.
Then, we have introduced and studied the \npcness{} of the \ksefe{} problem, a restricted version of the \sefe{} problem, where every private can receive at most $k$ crossings.

Our results raise two main questions.
First, as already mentioned at the end of Section~\ref{sec:gracsimnph}, it would be interesting to study the complexity of a relaxed version of the \gracsimdrawing{} problem, where a prescribed number of bends per edge are allowed; this open problem was already posed in~\cite{bdkw-sdpgrac-16}. In particular, it is not clear whether the reduction given in the proof of Theorem~\ref{th:gracsimnph} can be adapted for proving the \nphness{} of the one bend extension of \gracsim{}.
Another interesting open problem is to investigate the complexity of \kpairsefe{} when the ratio $|V|/k$ tends to $\frac{1}{3} + \frac{2}{k}$ from the right; we recall that for $k \geq 3|V| - 6$, \kpairsefe{} and \sefe{} are the same problem, and that the \nphness{} of \kpairsefe{} strongly relies on a construction where the ratio $|V|/k$ is significantly greater than $\frac{1}{3} + \frac{2}{k}$.

\bibliographystyle{splncs03}
\bibliography{Hardness-GRacSim-kSefe}

\end{document}